\newcommand{\x}{\mathsf{x}}
\newcommand{\y}{\mathsf{y}}
\newcommand{\R}{\mathbb{R}}
\newcommand{\C}{\mathcal{C}}
\newcommand{\dotprod}[2]{\langle #1,#2\rangle}
\DeclareMathOperator{\atantwo}{atan2}
\newtheorem{definition}{Definition}{}
{}
{}
{}
{}
{}
\newtheorem{theorem}{Theorem}{}
\newenvironment{proof}{%
  \par\noindent\textbf{Proof. }%
}{%
  \hfill$\square$\par
}
\DeclareMathOperator*{\argmin}{arg\,min}
\begin{document}
\begin{frontmatter}

\title{Collision-Aware Density-Driven Control of Multi-Agent Systems via Control Barrier Functions} 

\thanks[footnoteinfo]{This work was supported by NSF CAREER Grant CMMI-DCSD-\#2145810. This paper has been accepted for presentation at the 5th Modeling, Estimation and Control Conference (MECC 2025), Pittsburgh, USA.}

\author[First]{Sungjun Seo} 
\author[Second]{Kooktae Lee}

\address[First]{New Mexico Institute of Mining and Technology, 
   Socorro, NM 87801 USA (e-mail: sungjun.seo@student.nmt.edu).}
\address[Second]{New Mexico Institute of Mining and Technology, 
   Socorro, NM 87801 USA (e-mail: kooktae.lee@nmt.edu), Corresponding author.}

\begin{abstract}                This paper tackles the problem of safe and efficient area coverage using a multi-agent system operating in environments with obstacles. Applications such as environmental monitoring and search and rescue require robot swarms to cover large domains under resource constraints, making both coverage efficiency and safety essential. To address the efficiency aspect, we adopt the Density-Driven Control (D$^2$C) framework, which uses optimal transport theory to steer agents according to a reference distribution that encodes spatial coverage priorities. To ensure safety, we incorporate Control Barrier Functions (CBFs) into the framework. While CBFs are commonly used for collision avoidance, we extend their applicability by introducing obstacle-specific formulations for both circular and rectangular shapes. In particular, we analytically derive a unit normal vector based on the agent's position relative to the nearest face of a rectangular obstacle, improving safety enforcement in environments with non-smooth boundaries. Additionally, a velocity-dependent term is incorporated into the CBF to enhance collision avoidance. Simulation results validate the proposed method by demonstrating smoother navigation near obstacles and more efficient area coverage than the existing method, while still ensuring collision-free operation.
\end{abstract}

\begin{keyword}
control barrier function, collision avoidance, multi-agent system,
density-driven control, non-uniform area coverage, optimal transport 
\end{keyword}

\end{frontmatter}

\section{Introduction}

Swarm robotics has gained attention as a framework where large groups of simple agents collaborate to perform complex tasks \citep{beni2005swarm}. The properties of decentralized control, local sensing, and autonomous coordination allow swarm systems to demonstrate flexibility, robustness, and scalability \citep{brambilla2013swarm}. These features are crucial for applications in dynamic environments such as disaster response \citep{arnold2020heterogeneous}, pollution monitoring \citep{aznar2014modelling}, and wildlife surveillance \citep{kabir2021wildlife}, where agents must explore large areas with limited resources. Additionally, swarm systems can provide high levels of fault tolerance, which is especially beneficial in unpredictable or hazardous environments.

A key challenge in these missions is balancing efficiency and safety. Efficiency is critical due to constraints on agent count, energy, and communication range, while safety ensures agents avoid collisions in cluttered environments. Recent research has introduced Density-Driven Control (D$^2$C), a framework based on Optimal Transport (OT) theory, to enable multi-agent systems to distribute themselves according to a reference density that represents regional importance. This ensures that agents prioritize high-density regions and avoid redundant coverage in low-density areas. D$^2$C has been particularly useful for missions that require non-uniform area coverage, where different regions have varying levels of importance.

However, D$^2$C does not inherently handle collision avoidance. While classical methods like Artificial Potential Fields (APFs) are commonly used for this purpose, they often suffer from oscillations and local minima, limiting their reliability in dynamic environments. To address these challenges, Control Barrier Functions (CBFs) provide a more systematic framework for enforcing safety by ensuring forward invariance of a safe set. Most CBF-based collision avoidance studies model obstacles and agents as circular or elliptical shapes to simplify the mathematics, limiting their effectiveness in environments with more complex geometries.


To integrate safety into the D$^2$C framework, this paper proposes a novel enhancement of the conventional CBF approach. The key contributions are as follows. Firstly, generalized CBF formulations are introduced for rectangular obstacles, offering a more realistic representation of environments like buildings and urban landscapes. Secondly, the unit normal vector required for the CBF is derived analytically, based on the agent's relative position to the nearest face of the obstacle. Thirdly, a velocity-dependent term is incorporated into the CBFs to enhance collision avoidance by regulating the agent's speed near obstacles. Lastly, the CBFs are seamlessly integrated with D$^2$C, resulting in a unified control strategy that ensures both efficient coverage and safety, allowing the system to adapt to dynamic environmental conditions.

Simulation results demonstrate that the D$^2$C+CBF method outperforms the D$^2$C+APF baseline, delivering superior safety and coverage efficiency in complex environments. The proposed approach ensures that agents not only achieve effective area coverage but also robustly avoid obstacles, including inter-agent collisions, adapting to dynamic changes in the environment.

These contributions offer a robust solution for safe and efficient multi-agent area coverage in realistic environments with complex obstacles, providing a valuable tool for various real-world applications.

\section{Preliminaries}
\noindent{Notations}:  The sets of real numbers and integers, respectively, are denoted by $\mathbb{R}$ and $\mathbb{I}$. The set of natural numbers is denoted by $\mathbb{N}$. The set of positive real numbers is represented by $\R_{>0}$. The integers in the interval $[n,m]$ are represented by $\mathbb{I}_{n:m}$. The notation $\{q_j\}_{j=1}^{N}$ represents the set of elements $\{q_1,q_2,\cdots,q_N\}$. The inner product of the vectors $\mathbf{a}$ and $\mathbf{b}$ is represented by $\langle \mathbf{a},\mathbf{b}\rangle$, and the norm of $\mathbf{a}$ is given by $\norm{\mathbf{a}}=\sqrt{\langle \mathbf{a},\mathbf{a}\rangle}$. The function $\atantwo(\mathbf{a})$, where $\mathbf{a}$ is a vector $(x, y)$, returns the counterclockwise angle (from $-\pi$ to $\pi$) between the positive $x$-axis and the vector $(x, y)$ in the Cartesian plane.

\subsection{Area Coverage Problem}

The area coverage problem in multi-agent systems involves coordinating agents to efficiently explore and cover a designated area over time.

A common approach is uniform area coverage, where agents systematically cover the domain using methods like the lawnmower technique. While effective in ideal conditions, uniform strategies are impractical in real-world missions with limited agent resources, such as the number of agents available for deployment and their operational time.

In scenarios like search and rescue operations following natural disasters, large and cluttered environments combined with resource constraints make rapid, intelligent responses crucial. 
Non-uniform area coverage addresses these challenges. Given the regional priority map shown in Fig. \ref{fig: concept}(a), non-uniform coverage can be achieved by allocating more agent coverage effort to high-priority regions, as illustrated in Fig. \ref{fig: concept}(b).
By integrating constraints such as agents' available coverage time, communication range, and agent count, the coverage strategy becomes more practical and efficient.

The D$^2$C framework, introduced in \cite{kabir2020receding, kabir2021efficient, kabir2021wildlife, lee2022density}, provides a solution for non-uniform area coverage using multi-agent systems. Built on Optimal Transport (OT) theory, D$^2$C steers agents to match a reference density map, which can be derived from sources like satellite imagery and cellular signals, highlighting regions of higher priority.

\begin{figure}[!h]
    \centering
    \includegraphics[width=0.9\linewidth]{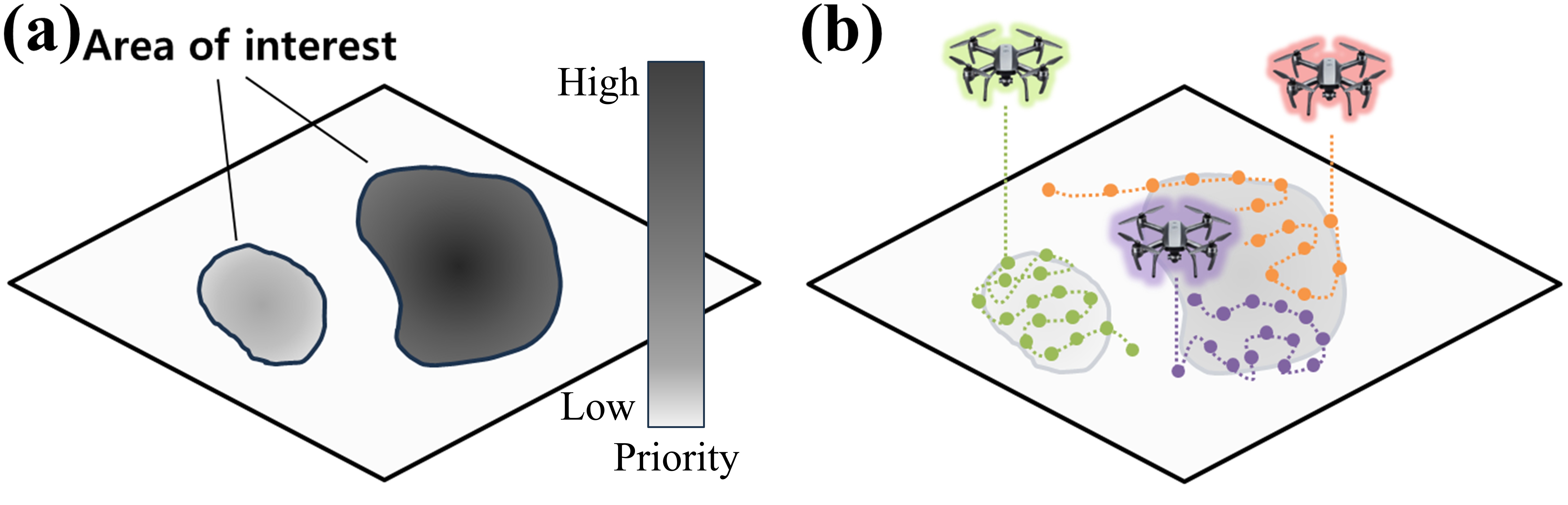}
    \caption{Conceptual illustration of non-uniform area coverage: (a) Pre-determined priority map; (b) Example of non-uniform coverage, where dashed lines represent continuous agent trajectories and the dots indicate discrete-time positions.}
    \label{fig: concept}
\end{figure}


Unfortunately, the original D$^2$C formulation does not account for collision avoidance, a critical concern in dense or constrained environments. To address this, we extend D$^2$C with a collision avoidance mechanism based on CBFs, ensuring both safety and efficiency in area coverage.

To facilitate a better understanding of its application in non-uniform area coverage, a brief overview of D$^2$C is provided in the following section.

\subsection{Density-Driven Control (D$^2$C)}

Optimal Transport (OT) theory is a mathematical framework used to measure the distance between two distributions. The Wasserstein distance, a key concept in OT, can be thought of as finding the most efficient way to move mass from one distribution to another while minimizing the total cost. Alternatively, the Wasserstein distance can be interpreted as a metric that quantifies the distance between two distributions. In multi-agent systems, this distance helps guide agents to align their temporal distribution with a given reference distribution represented by sample points. The D$^2$C framework, developed in \cite{kabir2020receding, kabir2021efficient, kabir2021wildlife, lee2022density}, leverages OT theory to enable agents to align their positions with a reference distribution that encodes spatial priorities for coverage, such as areas requiring more attention or resources.

The D$^2$C method operates iteratively, with each agent adjusting its position based on a local set of reference points, known as sample points (SPs), each associated with specific weights. These weights reflect the importance of the corresponding areas in the environment. While the detailed description of D$^2$C can be found in \cite{kabir2020receding, kabir2021efficient, kabir2021wildlife, lee2022density}, we provide a brief conceptual overview here, given the space limitations.

\begin{figure*}[h]
    \centering
    \includegraphics[width=0.85\linewidth]{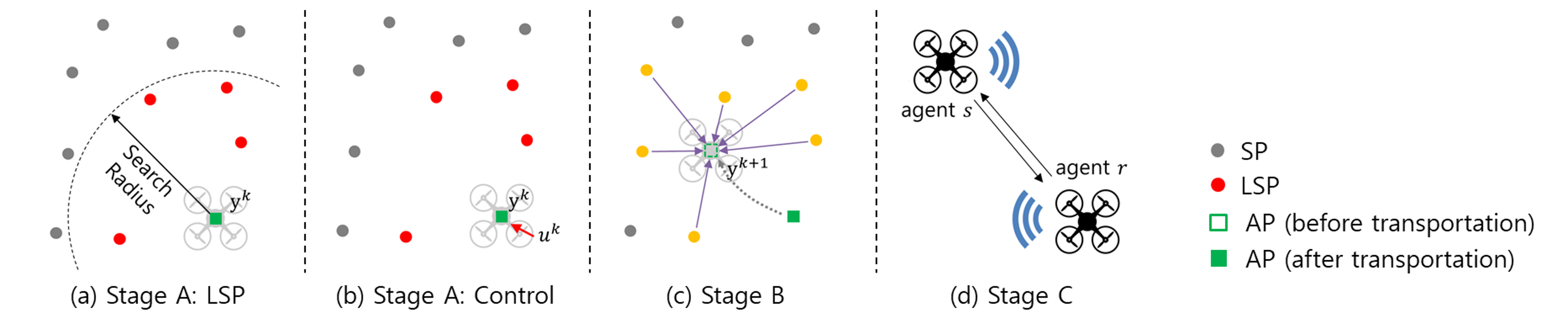}
    \caption{Conceptual illustration of three-stage D$^2$C procedure}
    \label{fig:D2C illustration}
\end{figure*}

\subsubsection{Stage A: Optimal Control Stage}  
In this stage, each agent determines its next goal position by evaluating the cost associated with potential waypoints generated from nearby sample points, referred to as local sample points (LSPs). The LSPs are selected based on the weight-normalized distance (wnD) metric, which considers both the distance to the sample point and the remaining weight of that point. Each agent extends its search radius to include a specific number of LSPs. Once the LSPs are selected, the agent constructs possible waypoints, and the cost of each waypoint is evaluated. The waypoint with the minimum cost is chosen, and the agent sets its goal position as the first sample point in the selected waypoint. The agent then moves towards this goal position using a control law that accounts for its dynamics and the selected goal.

\subsubsection{Stage B: Weight Update Stage}  
After reaching the goal position, each agent updates the weights of the sample points based on its movement. The weights are adjusted by transporting them to the new position while minimizing the Wasserstein distance, ensuring that areas of high priority are covered as efficiently as possible. This update process enables each agent to track which regions still require attention and reinforces alignment with the reference distribution. 

Through this stage, the coverage progress is dynamically updated, allowing the system to iteratively adapt its strategy in response to evolving coverage and focus on under-covered areas in subsequent planning cycles.

\subsubsection{Stage C: Weight-sharing Stage}  
When agents are within communication range of one another, they share their updated weights in a decentralized manner. This localized communication allows each agent to operate independently without requiring access to global information, while still enabling coordinated behavior. By exchanging only local data, agents synchronize their coverage efforts and maintain a shared understanding of which regions have been explored. The weight-sharing process ensures that each agent has access to the most recent information about the environment, improving overall coverage efficiency while preserving the decentralized nature of the system.

The conceptual illustration for the three-stage D$^2$C procedure is shown in Fig. \ref{fig:D2C illustration}.

\section{Control Barrier Functions (CBFs)-based Collision Avoidance}

\subsection{Obstacle Models}\label{sec: Obstacle Models}
In this study, two types of obstacles are considered: circular and rectangular shapes. These shapes define the regions that robots must avoid to ensure collision-free operation. While more complex obstacle shapes may arise in practical applications, they can generally be approximated by these basic forms or their combinations, provided an appropriate safety margin is applied. Fig. \ref{fig: Obstacles} illustrates the dimensional parameters of both circular and rectangular obstacles.
\vspace{-0.1in}

\begin{figure}[h]   
\centering\includegraphics[width=0.8\linewidth]{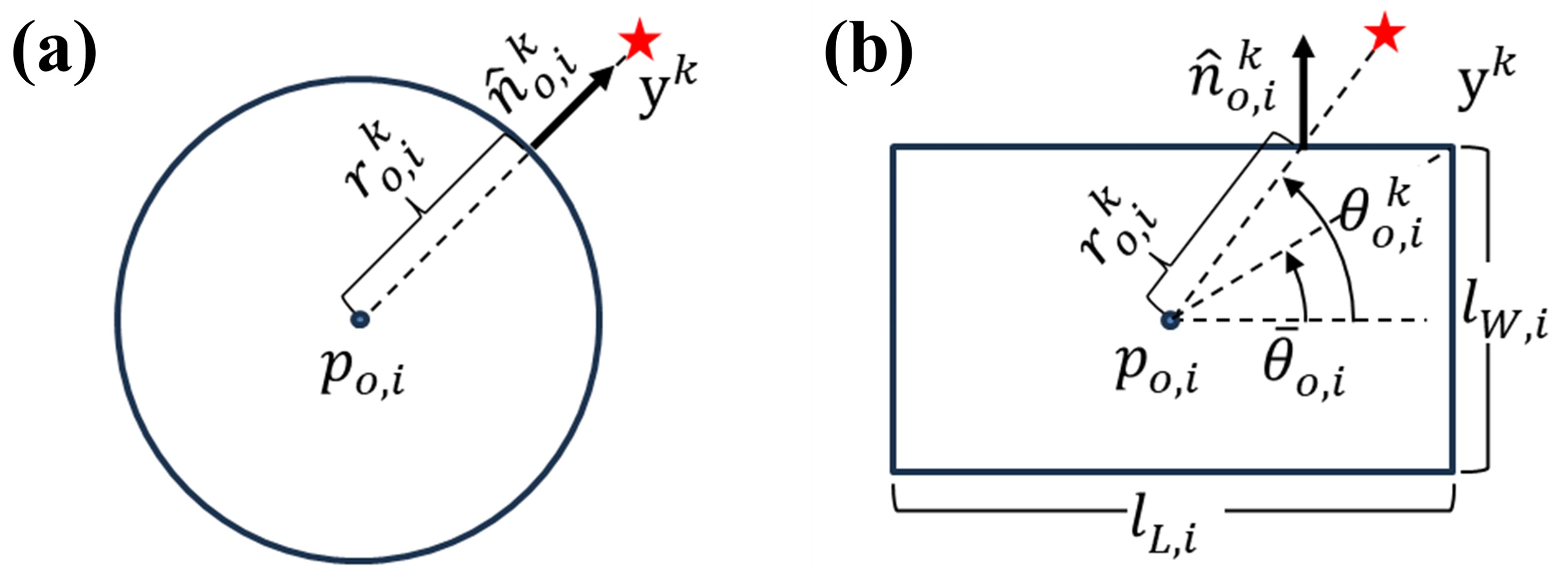}\caption{dimensional parameters of obstacles. (a) Circular obstacle. (b) Rectangular obstacle.}    \label{fig: Obstacles}\end{figure}

In Fig.~\ref{fig: Obstacles}, \( p_{o,i} \) represents the mass center of each obstacle, where \( i = 1, 2, \dots, O \) denotes the obstacle index. Furthermore, the right superscript $k$ is used throughout the paper to indicate the time step. The variable \( r^k_{o,i} \) indicates the distance from \( p_{o,i} \) to the intersection point between the obstacle boundary and the line extending from \( p_{o,i} \) to the agent position at time $k$, \( \y^k \). The unit vector \( \hat{n}_{o,i}^k \) is normal to the obstacle boundary at this intersection point. The variable \( \theta^k_{o,i} \) represents the angle between the horizontal axis and the line extending from \( p_{o,i} \) toward \( \y^k \). Since these variables depend on \( \y^k \), the time index \( k \) is included in their notation. The following results provide a detailed explanation of how these variables are determined, with some variables illustrated in Fig. \ref{fig: Obstacles}.

\textbf{Circular obstacle: } For a circular obstacle, $r^k_{o,i}$ is simply the radius of the circle and remains constant regardless of $\y^k$.
The unit normal vector $\hat{n}_{o,i}^k$ is given by $\hat{n}_{o,i}^k = (\y^k - p_{o,i}) / \norm{\y^k - p_{o,i}}$.

\textbf{Rectangular obstacle: } For a rectangular obstacle, $\bar{\theta}_{o,i}$ is given by $\atantwo (l_{W,i},l_{L,i})$. Then $r^k_{o,i}$ is obtained as follows:
{\small
\begin{flalign}
    &r^k_{o,i} = \left\{\ \begin{aligned}
    &\frac{l_{W,i}}{2\abs{\cos(\pi/2-\theta^k_{o,i})}}, &&\text{if } \bar{\theta}_{o,i}<\abs{\theta^k_{o,i}}<\pi-\bar{\theta}_{o,i}
    \\ &\frac{l_{L,i}}{2\abs{\cos(\theta^k_{o,i})}}, &&\text{otherwise}.
    \end{aligned}\right.&&\nonumber
\end{flalign}}
The unit normal vector $\hat{n}_{o,i}^k$ is defined as follows: $[1, 0]^\top$ if $-\bar{\theta}_{o,i} \leq \theta^k_{o,i} < \bar{\theta}_{o,i}$, $[0, 1]^\top$ if $\bar{\theta}_{o,i} \leq \theta^k_{o,i} < \pi - \bar{\theta}_{o,i}$, $[0,-1]^\top$ if $\bar{\theta}_{o,i} - \pi \leq \theta^k_{o,i} < -\bar{\theta}_{o,i}$, and $[-1, 0]^\top$ otherwise.

\subsection{Control Barrier Functions (CBFs)}
CBFs are a mathematical framework developed for enforcing safety constraints in control systems. They ensure that the trajectory of the dynamic system remains within the predefined safe set while allowing flexibility in control design.
Consider the following nonlinear control-affine system:
\begin{align}
    \x^{k+1} = f(\x^k)+g(\x^k)u^k,\ \y^k = H_\y\x^k,\ v^k = H_v\x^k, \label{eqn: nonlinear control affine system}
\end{align}where $\x\in D\subset\R^{n}$ and $u\in U\subset\mathbb{R}^m$ represent the state and the input vector, where $ D$ and $U$ denote the sets of admissible states and inputs, respectively. The functions $f:\R^n\rightarrow\R^n$ and $g:\R^n\rightarrow\R^{n\times m}$ denote system functions, and $\y \in \mathbb{R}^{p}$ and $v \in \mathbb{R}^{p}$ represent the output vectors corresponding to position and velocity, respectively, in a $p$-dimensional Cartesian space.

Let the set $\C\subseteq D$ denote the \textit{safe set} of states, i.e., free from collisions. The continuously differentiable function $h:D\rightarrow\mathbb{R}$ can be constructed such that safe set $\C$ is its superlevel set, defined as follows: 
\begin{definition}{(Superlevel set \citep{ames2019control})}
    The set $\C$ is defined as a superlevel set of a continuously differentiable function $h: D\subset \mathbb{R}^n \rightarrow \mathbb{R}$:
    \begin{align}
        \begin{aligned}
            \C = \{\x \in D \subset \mathbb{R}^n: h(\x) \geq 0\}. 
        \end{aligned}\label{eqn: CBF_safeset}
    \end{align}
    Furthermore, the boundary and interior of $\C$ are defined as
    $
            \partial\C = \{\x \in D \subset \mathbb{R}^n: h(\x) = 0\}$ and $
            \text{Int}\C = \{\x \in D \subset \mathbb{R}^n: h(\x) > 0\},
$ respectively.
\end{definition}

The function $h$ is referred to as a \textit{control barrier function} if it satisfies the condition specified in the definition below.
\begin{definition}{(Control barrier function \citep{ames2019control})}
Consider a continuously differentiable function $h: D\rightarrow\mathbb{R}$ where $\C\subset D\subseteq\mathbb{R}^n$ is its superlevel set. The function $h(\x)$ is a control barrier function (CBF) if there exists an extended class-$\mathcal{K}_\infty$ function $\alpha$ that satisfies for the system \eqref{eqn: nonlinear control affine system}:
\begin{align}
    \exists\ u\in U \ \  \text{s.t.}\  \dot{h}(\x,u) \geq -\alpha(h(\x))\  \Leftrightarrow \ \C\ \text{is invariant}\label{eqn: ProbStat_CBF_Cond}
\end{align}
\end{definition}
In the discrete-time system, \eqref{eqn: ProbStat_CBF_Cond} is represented by 
\begin{align}
  h(\x^{k+1})-h(\x^k) \geq -\alpha(h(\x^k)). \label{eqn: CBF_safecond}   
\end{align}
From the control barrier function $h$, consider the set of inputs that satisfies \eqref{eqn: ProbStat_CBF_Cond}, defined as
\begin{align}
   K_{\text{cbf}}(\x) = \{u\in U: \dot{h}(\x,u)+\alpha(h(\x))\geq0\}.\label{eqn: ProbStat_Kcbf}
\end{align}
Based on the earlier definitions, the following theorem guarantees the system's safety.
\begin{theorem}{(Condition for the safety of the dynamical system \citep{ames2019control})}\label{thm: ProbStat_safety condition}
    Let $\C\subset D$ be a compact superlevel set of a continuously differentiable control barrier function $h:D\rightarrow\mathbb{R}$ on $D$. If $\frac{\partial h}{\partial \x}(\x) \neq 0$ for all $\x\in\partial \C$, then any Lipschitz continuous control input $u = K_{\text{cbf}}(\x)\in U$ for the control-affine system \eqref{eqn: nonlinear control affine system} renders the set $\C$ safe. Additionally, the set $\C$ is asymptotically stable in $D$. 
\end{theorem}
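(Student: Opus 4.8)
The plan is to prove the two assertions — safety, i.e.\ forward invariance of $\C$, and asymptotic stability of $\C$ in $D$ — separately, in each case by restricting $h$ to a closed-loop trajectory, reducing the state dynamics to a scalar differential inequality for $y(t):=h(\x(t))$, and invoking the comparison lemma together with the structure of the extended class-$\mathcal{K}_\infty$ function $\alpha$.

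For safety, note first that since $u=K_{\text{cbf}}(\x)$ is Lipschitz continuous and $f,g$ are locally Lipschitz, the closed-loop vector field $\x\mapsto f(\x)+g(\x)K_{\text{cbf}}(\x)$ is locally Lipschitz, so for each $\x_0\in D$ there is a unique maximal solution $\x(\cdot)$. Fix $\x_0\in\C$ and set $y(t):=h(\x(t))$. Differentiability of $h$ and membership $u\in K_{\text{cbf}}(\x)$ in \eqref{eqn: ProbStat_Kcbf} give $\dot y(t)\geq-\alpha(y(t))$. Comparing with the scalar ODE $\dot z=-\alpha(z)$, $z(0)=y(0)\geq0$, and using $\alpha(0)=0$ together with $\alpha$ increasing (so $z$ cannot cross the equilibrium at the origin and hence $z(t)\geq0$ for all $t$), the comparison lemma yields $h(\x(t))=y(t)\geq z(t)\geq0$, i.e.\ $\x(t)\in\C$ on the whole interval of existence. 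Compactness of $\C$ then precludes finite escape time, so the solution exists for all $t\geq0$, and $\C$ is forward invariant.

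For asymptotic stability I would rerun the same comparison argument starting from $\x_0\in D$ with $h(\x_0)<0$: now $\alpha(z)<0$ for $z<0$, so the comparison solution of $\dot z=-\alpha(z)$, $z(0)=y(0)<0$, is nondecreasing with $z(t)\to0$, and $h(\x(t))\geq z(t)$ forces $h(\x(t))$ into every set $\{\x\in D:h(\x)\geq-\varepsilon\}$ and keeps it there; together with the forward invariance established above (which covers the case $h(\x(t))\geq0$) and continuity of $h$ this gives $\operatorname{dist}(\x(t),\C)\to0$, while the monotone lower bound $h(\x(t))\geq z(0)=h(\x_0)$ yields Lyapunov stability of $\C$. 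The hypothesis $\frac{\partial h}{\partial\x}(\x)\neq0$ on $\partial\C$ is what legitimizes this last step: it makes $\{h=0\}=\partial\C$ a regular level set, so that $\operatorname{Int}\C=\{h>0\}$ as in the definition, the slabs $\{h\geq-\varepsilon\}$ genuinely form a neighborhood basis of $\C$ in $D$, and $h$ cannot vanish degenerately inside $\C$, which is precisely what lets the comparison bounds be read as statements about $\operatorname{dist}(\cdot,\C)$.

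The routine steps are the two applications of the comparison lemma and the existence/uniqueness of closed-loop solutions. The main obstacle is that $\alpha$ is merely an extended class-$\mathcal{K}_\infty$ function, not assumed Lipschitz, so one must use the form of the comparison lemma valid under that weaker hypothesis and treat the maximal interval of existence carefully; the accompanying delicate point is the neighborhood and level-set regularity of $\C$ inside $D$ needed for the asymptotic-stability claim, which is exactly where $\partial h/\partial\x\neq0$ on $\partial\C$ is invoked. As this is a standard result in the CBF literature \citep{ames2019control}, most of the argument can be imported directly.
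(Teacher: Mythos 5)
The paper does not actually prove this theorem: it is stated as imported background from \cite{ames2019control}, and the only result the paper proves itself is the discrete-time Theorem~2. So there is no in-paper argument to compare yours against; what you have written is, in outline, the standard comparison-lemma proof of this result from the cited reference, and it is essentially sound. Forward invariance follows from $\dot y\ge-\alpha(y)$, the observation that no solution of $\dot z=-\alpha(z)$ with $z(0)\ge 0$ can cross zero (because $\alpha(0)=0$ and $\alpha<0$ on the negative axis forces any excursion below zero to contradict monotonicity), and compactness of $\C$ to exclude finite escape time; the asymptotic-stability half reruns the comparison from $h(\x_0)<0$ and uses the regularity of the level set $\{h=0\}$ to convert bounds on $h$ into bounds on $\operatorname{dist}(\cdot,\C)$.

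Two points deserve more care than your sketch gives them, though neither is a substantive gap relative to the source. First, since $\alpha$ is not assumed Lipschitz, $\dot z=-\alpha(z)$ may lack unique solutions and the differential inequality $\dot y\ge-\alpha(y)$ must be compared against the \emph{minimal} solution; your remark that $\alpha$ is increasing with $\alpha(0)=0$ does suffice to show even the minimal solution from a nonnegative initial condition stays nonnegative, but that is the step that should be made explicit. Second, for initial conditions in $D\setminus\C$ the closed-loop trajectory could in principle leave $D$ or fail to exist for all forward time before $h$ approaches zero; like the original reference, your argument tacitly assumes the solution remains in $D$, and the conclusion that $\C$ is ``asymptotically stable in $D$'' is only meaningful on the set of initial conditions for which that holds. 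Finally, note that the paper itself only ever invokes the discrete-time analogue \eqref{eqn: CBF_safecond}--\eqref{eqn: discrete_K_cbf}, so this continuous-time statement plays no load-bearing role in its development.
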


Theorem \ref{thm: ProbStat_safety condition} guarantees that if the initial state $\x_0$ belongs to the safe set $\C$, then the state $\x(t)$ remains in $\C$ for all $t\in[0,\infty)$ under any control input satisfying \eqref{eqn: ProbStat_Kcbf}, ensuring the safety of the dynamical system.

\subsubsection{Design of Control Barrier Functions} 

Based on the obstacle models in Section \ref{sec: Obstacle Models}, two CBFs are proposed for each obstacle $i$ as follows:
{\small
\begin{flalign}
    &h_{i,1}(\x^{k}) = \norm{H_\y\x^{k}-p_{o,i}}^2-(r_{o,i}^{k-1})^2,&& \label{eqn: CBF_h1}
\\&
    h_{i,2}(\x^{k}) = \norm{H_\y\x^{k}-p_{o,i}}^2-(r_{o,i}^{k-1})^2+K_v\dotprod{\operatorname{\hat{n}}^{k-1}_{o,i}}{H_v \x^{k-1}}\label{eqn: CBF_h2}&&
    \\&/\left\{\frac{\norm{{H_\y\x^{k-1}-p_{o,i}}}-r_{o,i}^{k-1}}{\norm{{H_\y\x^{k-1}-p_{o,i}}}}\dotprod{\operatorname{\hat{n}}^{k-1}_{o,i}}{{H_\y\x^{k-1}-p_{o,i}}}\right\}.\nonumber&&
\end{flalign}
}
The first CBF, defined in equation~\eqref{eqn: CBF_h1}, ensures that the agent remains safe by avoiding areas inside the obstacles, designated as unsafe sets. The second CBF, given by equation~\eqref{eqn: CBF_h2}, enhances the robustness of the avoidance strategy by incorporating factors not explicitly captured in the agent's dynamics, such as acceleration constraints. For instance, while assuming unbounded acceleration in theoretical models allows instant direction changes, this is not physically realistic. In practice, acceleration limits may prevent the agent from decelerating in time to avoid a collision. To address this, the second CBF adjusts the safe set by shrinking the permissible velocity in the direction of the obstacle as the agent approaches, ensuring sufficient time for deceleration and collision avoidance.

\textbf{Example:} Consider an obstacle located in a 1D domain, as depicted in Fig.~\ref{fig: Exam_obstacle}. The agent moves along this axis. The position and velocity of the agent are represented by $\y^k$ and $v^k$, respectively. The parameters shown in Fig.~\ref{fig: Obstacles} are set as follows: $p_o = 0$, $r_o^k = 2$, and $\hat{n}_o^k = \text{sign}(\y^k)$. Additionally, the parameter $K_v$ is set to 5. The resulting safe sets, defined by \eqref{eqn: CBF_h1} and \eqref{eqn: CBF_h2}, are illustrated in Fig.~\ref{fig: Example_Safeset}(a) and (b), where the green-shaded areas represent the safe sets defined by \eqref{eqn: CBF_safeset}.

\begin{figure}[h]
    \centering
    \includegraphics[width=0.5\linewidth]{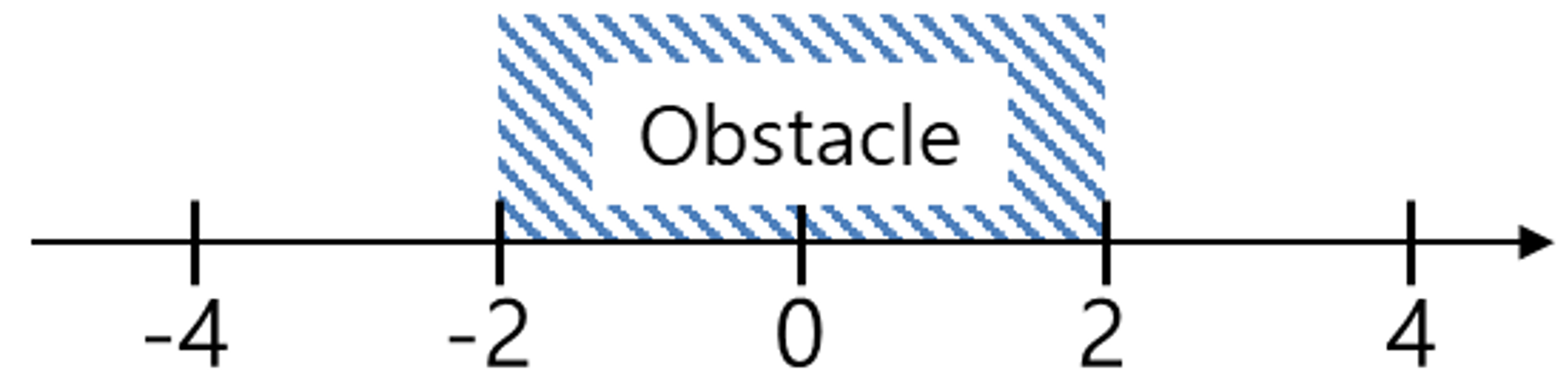}
    \caption{Example: an obstacle in a one-dimensional domain}
    \label{fig: Exam_obstacle}
\end{figure}

\begin{figure}[h]    
    \centering    
    \subfloat[Safe set of $h_{i,1}$]{    
        \includegraphics[width=0.310\linewidth]{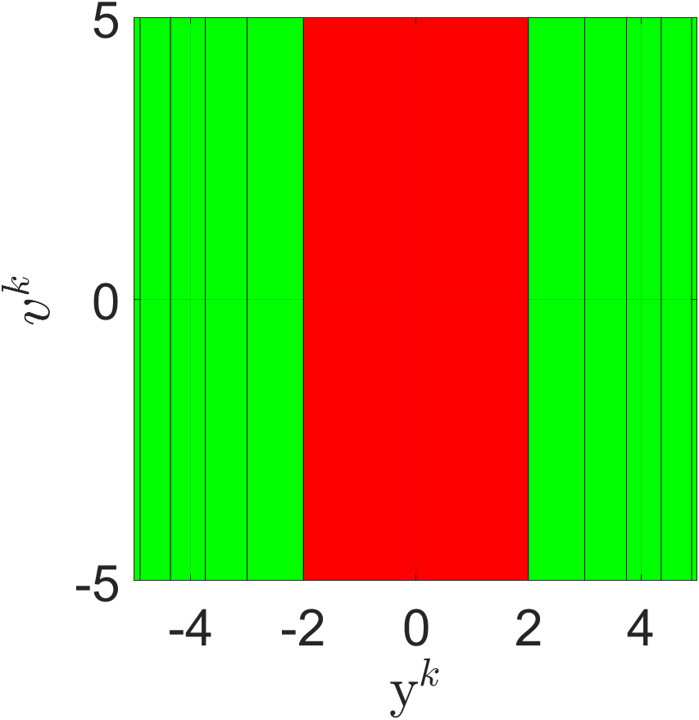}}    
    \   
    \subfloat[Safe set of $h_{i,2}$]{    
        \includegraphics[width=0.310\linewidth]{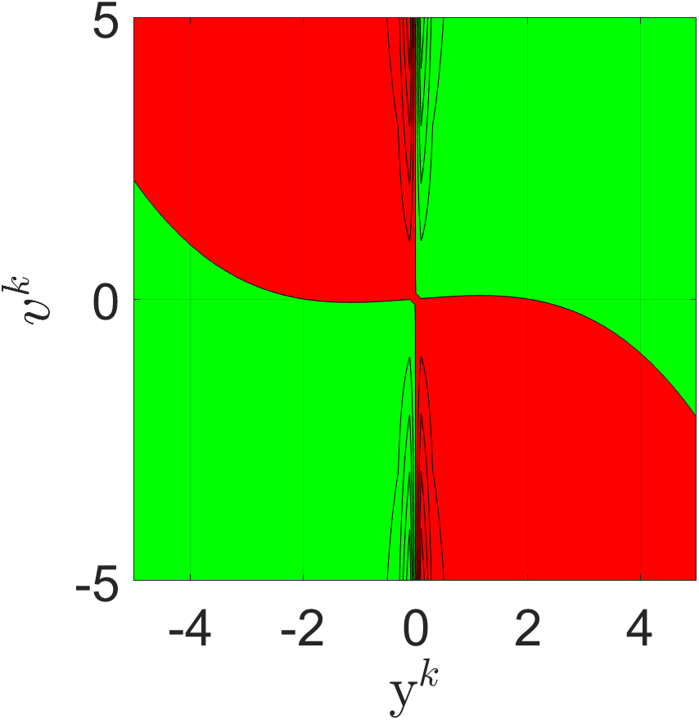}}    
    \   
    \subfloat[Overall safe set]{    
        \includegraphics[width=0.310\linewidth]{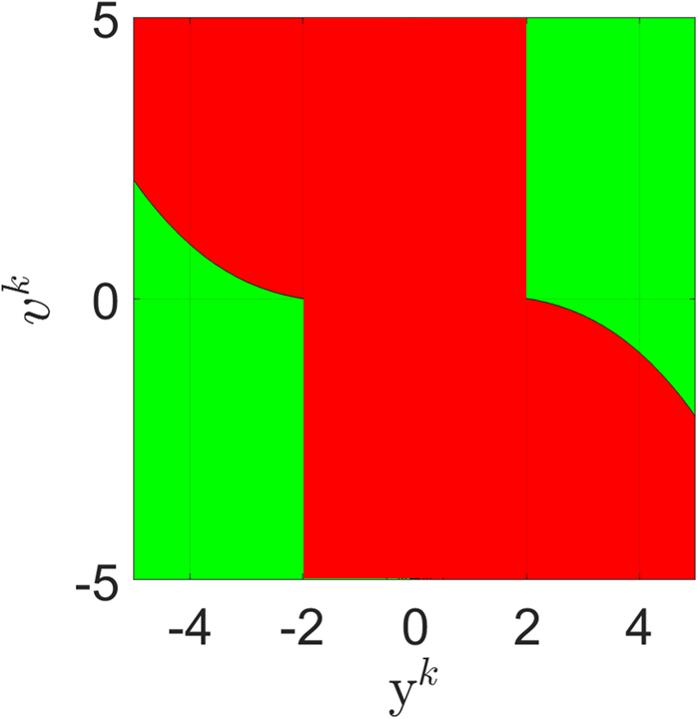}} 
    \caption{Safe sets defined by $h_{i,1}$ in \eqref{eqn: CBF_h1}, $h_{i,2}$ in \eqref{eqn: CBF_h2} and the resulting overall safe set.}    
    \label{fig: Example_Safeset}
\end{figure}

In Fig.~\ref{fig: Example_Safeset}(a), the safe set of $v^k$ remains constant regardless of the agent's position, $\y^k$. In contrast, Fig.~\ref{fig: Example_Safeset}(b) illustrates that the safe set of $v^k$ in the direction toward the obstacle shrinks as the agent approaches it, while the safe set in the opposite direction remains unbounded. This shrinkage forces the agent to reduce its velocity when approaching the obstacle, enhancing its robustness against acceleration constraints unknown to the controller and improving its safety during navigation. 

By utilizing both CBFs, the resulting overall safe set is defined as the intersection of the safe sets shown in Fig.~\ref{fig: Example_Safeset}(c), ensuring that both collision avoidance and robust navigation are achieved.

Before presenting the safety guarantee using CBFs, it is essential to first examine the concept of output relative degree in discrete-time nonlinear systems. This notion specifies the number of discrete-time differences required for the system output to become explicitly dependent on the control input. A clear understanding of the relative degree is critical for the design of safety-critical controllers, such as CBFs, as it determines when and how control inputs can affect constraint enforcement.

\begin{definition}{(Output Relative Degree $P$ in Discrete-Time Nonlinear Systems)}\label{def: output relativedegree}
Consider the discrete-time nonlinear control-affine system \eqref{eqn: nonlinear control affine system}.
The \emph{output relative degree} $P \in \mathbb{N}$ is defined as the smallest positive integer such that the input $u^k$ explicitly affects the future output $\y^{k+P}$. Specifically, it is the smallest $P$ such that
\[
\frac{\partial \y^{k+P}}{\partial u^k} = H_\y \cdot \frac{\partial \x^{k+P}}{\partial u^k} \neq 0,
\]
and
\[
\frac{\partial \y^{k+i}}{\partial u^k} = H_\y \cdot \frac{\partial \x^{k+i}}{\partial u^k} = 0, \quad \forall i < P.
\]

This condition reflects the number of discrete-time steps required for the control input \( u^k \) to influence the output \( \y^{k+P} \) through the system dynamics.
\end{definition}

\begin{theorem}{(Safety Guarantee via CBFs)}\label{thm: discrete-time Pos_Invar}
   Consider the nonlinear control-affine system in \eqref{eqn: nonlinear control affine system} with output relative degree \( P \). Let \( \mathcal{C} \) be the safe set associated with a CBF, \( h \), defined by either \eqref{eqn: CBF_h1} or \eqref{eqn: CBF_h2}. Given any initial state \( \x^0 \in \mathcal{C} \), if the invariance condition \eqref{eqn: CBF_safecond} holds for \( k = 0, 1, \dots, P-2 \), then applying any control input \( u^{k-P+1} \in U \) that satisfies \eqref{eqn: CBF_safecond} for all \( k \in \mathbb{I}_{P-1:\infty} \) ensures that the state of the system remains in \( \mathcal{C} \).
\end{theorem}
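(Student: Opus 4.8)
The plan is to reduce the statement to an application of the discrete-time invariance condition \eqref{eqn: CBF_safecond}, after carefully unwinding how the relative degree $P$ enters the CBFs \eqref{eqn: CBF_h1}–\eqref{eqn: CBF_h2}. The first observation is that both $h_{i,1}$ and $h_{i,2}$ are written as functions of $\x^{k}$ but their defining data $r_{o,i}^{k-1}$, $\hat{n}_{o,i}^{k-1}$, $H_v\x^{k-1}$, $H_\y\x^{k-1}$ are evaluated one step earlier; moreover $\x^{k}$ itself depends on $u^{k-1}$ only through $f,g$, so by Definition \ref{def: output relativedegree} the quantity $h(\x^{k})$ first becomes explicitly sensitive to an input $u^{k-P+1}$ rather than $u^{k-1}$. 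Hence I would begin by making this bookkeeping precise: show that $\partial h(\x^{k})/\partial u^{j}=0$ for $j>k-P+1$ and $\partial h(\x^{k})/\partial u^{k-P+1}\neq 0$ on $\partial\C$, so that enforcing \eqref{eqn: CBF_safecond} at step $k$ is a genuine constraint on the admissible input $u^{k-P+1}$, consistent with the indexing in the statement.

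Next I would set up the induction. Let $k\in\mathbb{I}_{P-1:\infty}$ and suppose inductively that $h(\x^{j})\geq 0$ for all $j\leq k$ (the base of the induction being $h(\x^{0})\geq 0$ from $\x^{0}\in\C$ together with the hypothesis that \eqref{eqn: CBF_safecond} holds for $k=0,\dots,P-2$, which propagates nonnegativity through the first $P-1$ steps). The inductive step is exactly \eqref{eqn: CBF_safecond} applied at index $k$: since $u^{k-P+1}\in U$ is chosen to satisfy $h(\x^{k+1})-h(\x^{k})\geq-\alpha(h(\x^{k}))$ and $\alpha$ is an extended class-$\mathcal{K}_\infty$ function, nonnegativity of $h(\x^{k})$ gives $-\alpha(h(\x^{k}))\leq 0$, hence $h(\x^{k+1})\geq h(\x^{k})-\alpha(h(\x^{k}))\geq 0$. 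This closes the induction and yields $h(\x^{k})\geq0$ for all $k$, i.e. $\x^{k}\in\C$ for all $k$, which is the claim. For the $h_{i,2}$ case I would additionally note that the denominator in \eqref{eqn: CBF_h2} is strictly positive whenever the agent is outside obstacle $i$ — because $\norm{H_\y\x^{k-1}-p_{o,i}}>r_{o,i}^{k-1}$ there and $\dotprod{\hat n_{o,i}^{k-1}}{H_\y\x^{k-1}-p_{o,i}}>0$ by the construction of $\hat n$ in Section \ref{sec: Obstacle Models} — so $h_{i,2}$ is well defined on $\C$ and the same monotonicity argument applies verbatim.

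The main obstacle I anticipate is not the induction itself, which is the standard comparison-type argument behind Theorem \ref{thm: ProbStat_safety condition}, but rather justifying the index shift cleanly: one must argue that the relative-degree-$P$ structure makes $h(\x^{k+1})$ controllable precisely through $u^{k-P+1}$ and that the "warm-up" conditions for $k=0,\dots,P-2$ are exactly what is needed so that, once we reach $k=P-1$, every subsequent constraint \eqref{eqn: CBF_safecond} can in fact be met by some admissible input. I would handle this by composing the dynamics $P-1$ times to write $\x^{k+1}=F_{P-1}(\x^{k-P+2},u^{k-P+2},\dots)$ is not quite right — more carefully, $\x^{k+1}$ depends on $u^{k-P+1}$ through the $(P-1)$-fold composition of $f+gu$, and the chain rule together with $H_\y\,\partial\x^{k+i}/\partial u^{k}=0$ for $i<P$ shows the first nonvanishing sensitivity occurs at the claimed index; the nonvanishing of $\partial h/\partial\x$ on $\partial\C$ (which holds since $\nabla h = 2(H_\y\x-p_{o,i})^\top H_\y \neq 0$ on the obstacle boundary) then transfers to nonvanishing sensitivity of $h$ to $u^{k-P+1}$, so $K_{\text{cbf}}$ is nonempty at each step and the hypotheses of the theorem are consistent.
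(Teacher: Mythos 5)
Your overall strategy --- dispose of the first $P-1$ steps via the hypothesized conditions, then induct on $k$ using \eqref{eqn: CBF_safecond}, with the relative-degree bookkeeping identifying $u^{k-P+1}$ as the input on which the constraint at step $k$ actually acts --- is the same skeleton as the paper's (much terser) proof, and your added remarks on the well-posedness of $h_{i,2}$ outside the obstacle and on the nonemptiness of $K_{\text{cbf}}$ go beyond what the paper writes down.

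There is, however, one genuine logical gap in your inductive step. From $h(\x^k)\geq 0$ you conclude $-\alpha(h(\x^k))\leq 0$ and then assert $h(\x^{k+1})\geq h(\x^k)-\alpha(h(\x^k))\geq 0$. The last inequality does not follow: $-\alpha(h(\x^k))\leq 0$ only yields $h(\x^k)-\alpha(h(\x^k))\leq h(\x^k)$, not $h(\x^k)-\alpha(h(\x^k))\geq 0$. For a general extended class-$\mathcal{K}_\infty$ function the implication is in fact false: take $\alpha(r)=2r$, so that \eqref{eqn: CBF_safecond} reads $h(\x^{k+1})\geq -h(\x^k)$, which plainly permits $h(\x^{k+1})<0$. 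The discrete-time invariance argument requires the additional (standard) restriction $\alpha(r)\leq r$ for $r\geq 0$, or, as the paper adopts immediately after the theorem, the specific choice $\alpha(h(\x^k))=h(\x^k)$, under which \eqref{eqn: CBF_safecond} collapses to $h(\x^{k+1})\geq 0$ and invariance is immediate. The same caveat applies to your base case, where you propagate nonnegativity of $h$ through the first $P-1$ steps using the identical implication; with the restriction on $\alpha$ stated explicitly, both the base case and the inductive step close and your proof is complete.
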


\begin{proof}
    By the sufficient condition for \( k = 0, 1, \dots, P-2 \) in Theorem \ref{thm: discrete-time Pos_Invar}, the states \( \x^1, \x^2, \dots, \x^{P-1} \) remain in \( \mathcal{C} \) as guaranteed by condition \eqref{eqn: CBF_safecond}. 

For all \( k \in \mathbb{I}_{P-1:\infty} \), the invariance condition \eqref{eqn: CBF_safecond} can be expressed as a function of \( \x^{k-P+1} \) and \( u^{k-P+1} \). The state-related terms in \eqref{eqn: CBF_h1} and \eqref{eqn: CBF_h2} can be rewritten using the control-affine system dynamics. These terms simplify to the following expressions:
\[
H_\y(\x^{k+1}) = H_\y f(\x^{k}) + H_\y g(\x^{k}) u^{k} \\
\]
and similarly for other outputs. Applying these results and using the relative degree property (as defined in Definition \ref{def: output relativedegree}), we can verify that the control input \( u^{k-P+1} \), when it satisfies the invariance condition \eqref{eqn: CBF_safecond}, ensures that the state \( \x^k \) remains within the safe set \( \mathcal{C} \) for all \( k \in \mathbb{I}_{P-1:\infty} \).
\end{proof}

By choosing the function $\alpha(h(\x^k)) = h(\x^k)$, the condition \eqref{eqn: CBF_safecond} is rewritten by $h(\x^{k+1}) \geq 0$. Then the admissible set of control input obtained based on Theorem \ref{thm: discrete-time Pos_Invar} is defined as follows:
{
\begin{align}
&K_{\text{cbf}}(\x^{k+P-1})\label{eqn: discrete_K_cbf} = \\&\{u^{k}\in U: 
h_{i,1}(\x^{k+P})\geq0,h_{i,2}(\x^{k+P})\geq0, \forall i\},\ \ \forall k\in\mathbb{N}\nonumber
\end{align}
}

Based on the control input obtained from the D$^2$C scheme for achieving area coverage, safe navigation can be ensured by minimally modifying this input so that it belongs to the set defined in \eqref{eqn: discrete_K_cbf}. This approach allows agents to avoid obstacles without significantly degrading area coverage performance. To do this, the modified control input $\tilde{u}^{k}$ is computed by solving the optimization as follows:
\begin{align}
    \begin{aligned}
    \tilde{u}^{k}=\argmin\nolimits_{u\in K_{\text{cbf}}(\x^{k+P-1})}  \norm{u-u^{k}}^2
    \end{aligned}\label{eqn: CBF_Optimization}
\end{align}
where $u^k$ is the control input obtained by the D$^2$C scheme. This optimization problem can be solved using a quadratic programming solver.

\section{Simulation}
To evaluate the performance of the proposed CBF-integrated D$^2$C scheme, two simulations were conducted: 1) a comparison of the CBF approach with and without the velocity-constrained CBF $h_{i,2}$, and 2) a comparison between D$^2$C+APF and D$^2$C+CBF in terms of area coverage and safety. The simulations utilize a discrete-time linear quadrotor model \citep{sabatino2015quadrotor}. The state vector is defined as $\x^{k} := [p_{x}^{k} \ dp_{x}^{k} \ \theta^{k} \ d\theta^{k} \ p_{y}^{k} \ dp_{y}^{k} \ \phi^{k} \ d\phi^{k}]^{\top}$, with the input vector given by $u^k = [\tau^k_{x} \ \tau^k_{y}]^{\top}$. The output vectors are defined by $\y^k = [p^k_{x} \ p^k_{y}]$ and $v^k = [dp_{x}^{k} \ dp_{x}^{k}]$. Here, $p_{x}$ and $p_{y}$ represent the quadrotor’s position in the $x$ and $y$ axes, $\phi$ and $\theta$ are the roll and pitch angles, and $\tau_{x}$ and $\tau_{y}$ denote the torques applied along the $x$- and $y$-axes, respectively. The notation $d(\cdot)$ represents the change in the variable between consecutive time intervals.

The control inputs are constrained as $\abs{\tau_{x}^{k}}, \abs{\tau_{y}^{k}}<10 \text{ Nm}$. The state variables are constrained as follows: $\abs{dp_{x}^{k}}, \ \abs{dp_{y}^{k}} < 1.75 \ \text{m/s}$, $\abs{\phi^{k}}, \ \abs{\theta^{k}} < 1.5^\circ$, and $\abs{d\phi^{k}}, \ \abs{d\theta^{k}} < 15^\circ/\text{sec}$. The communication range of each agent is set to 100 meters. The optimization problem defined in \eqref{eqn: CBF_Optimization}, subject to the constraint \eqref{eqn: discrete_K_cbf}, is solved using the \texttt{fmincon} function in \textsc{MATLAB}.
\subsection{Impact of Velocity-Constrained CBF on Safe Naviga-\\tion}

The CBF \( h_{i,1} \) enforces positional safety by defining a hard boundary around obstacle \( i \), while \( h_{i,2} \) further enhances safety by regulating the agent’s velocity as it approaches the obstacle. To demonstrate the effectiveness and robustness of combining both CBFs, we performed simulations using a linearized quadrotor model under two scenarios: (i) employing only \( h_{i,1} \), and (ii) employing both \( h_{i,1} \) and \( h_{i,2} \).

The simulation results are presented in Fig.~\ref{fig: Sim_Robustness}, where the black solid lines depict the agent’s trajectories, and red rectangles indicate the obstacles. The gray dots represent the reference distribution in its sample-based form, while the initial and final positions are marked by blue crosses and yellow circles, respectively.

\begin{figure}[h]
    \centering
    \includegraphics[width=0.8\linewidth]{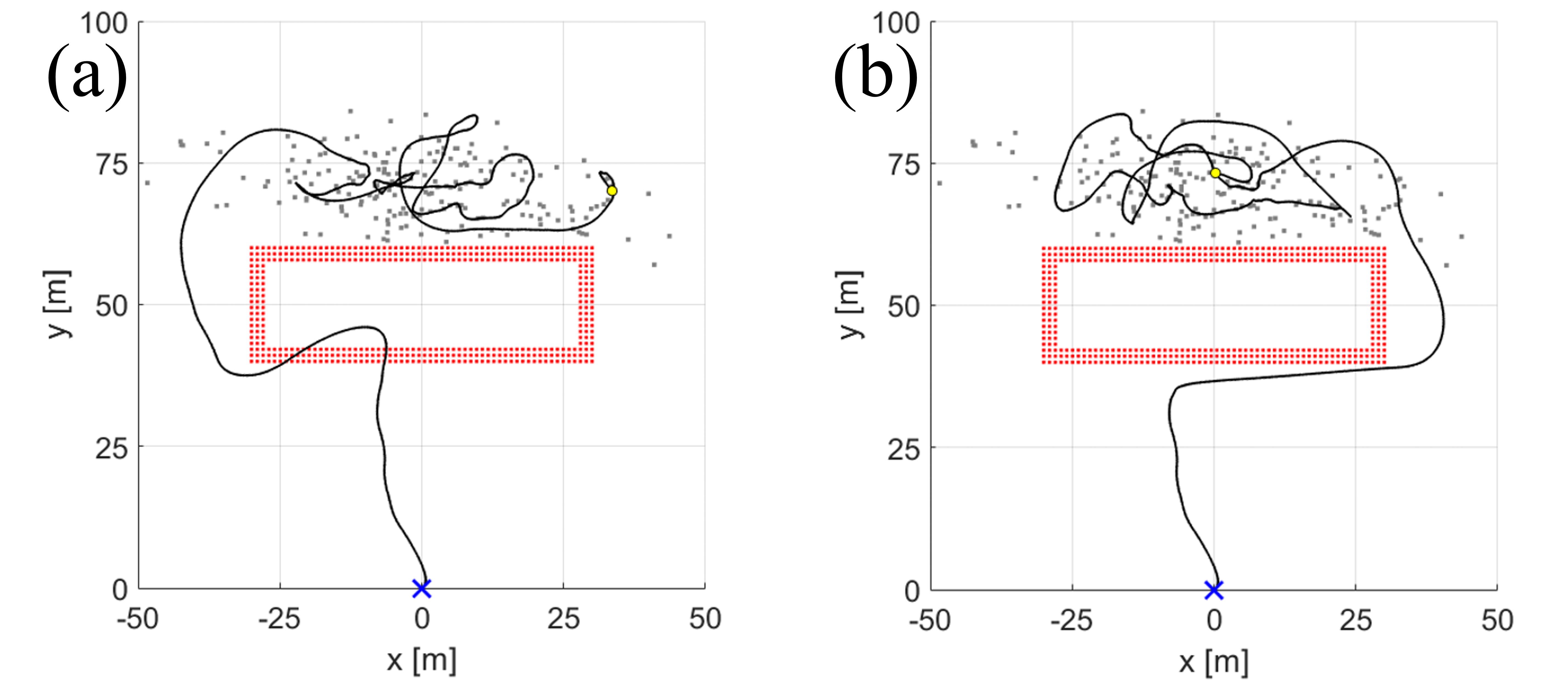}
    \caption{Impact of incorporating $h_{i,2}$ on collision avoidance under state constraints. (a) Only \( h_{i,1} \) is employed. (b) Both \( h_{i,1} \) and \( h_{i,2} \) are employed.}
    \label{fig: Sim_Robustness}
\end{figure}


In Fig.~\ref{fig: Sim_Robustness}(a), the agent momentarily violates the obstacle boundary, as it fails to reduce velocity in time. This failure arises from the controller's inability to account for implicit state constraints such as maximum deceleration or turning limitations, which stem from bounds on angular velocity and heading rate. Without anticipating these constraints, the agent cannot adapt its trajectory quickly enough to avoid intrusion.

In contrast, Fig.~\ref{fig: Sim_Robustness}(b) shows that when both CBFs are used, the agent smoothly decelerates as it nears the obstacle and maintains a safe distance throughout its trajectory. The inclusion of \( h_{i,2} \) enables proactive velocity modulation, accounting for the agent’s physical limitations and ensuring safer, more robust navigation.

\subsection{Safe Area Coverage: D$^2$C with APF- and CBF-Based Collision Avoidance}

To assess the effectiveness of safe area coverage using the D$^2$C framework in environments with obstacles, simulations were conducted in a domain featuring static obstacles and SPs. In Fig.~\ref{fig: Coverage_result}, gray dots represent the SPs, blue crosses indicate the initial positions of the agents, and red shapes denote the obstacles.

The study compares two collision avoidance schemes: D$^2$C integrated with APF and D$^2$C augmented with CBFs. For further details on the APF-based collision avoidance method, refer to~\cite{seo2023density}. The simulation involves three agents, with inter-agent collision avoidance modeled by treating each agent as a dynamic circular obstacle centered at 
$p_o=\y^k$, 
with radius $r^k_o= 5$ m representing the safety margin to prevent actual collisions. A collision is considered to occur if any two agents come within 5 m of each other. 

\begin{figure}[h]
    \centering
    \includegraphics[width=0.8\linewidth]{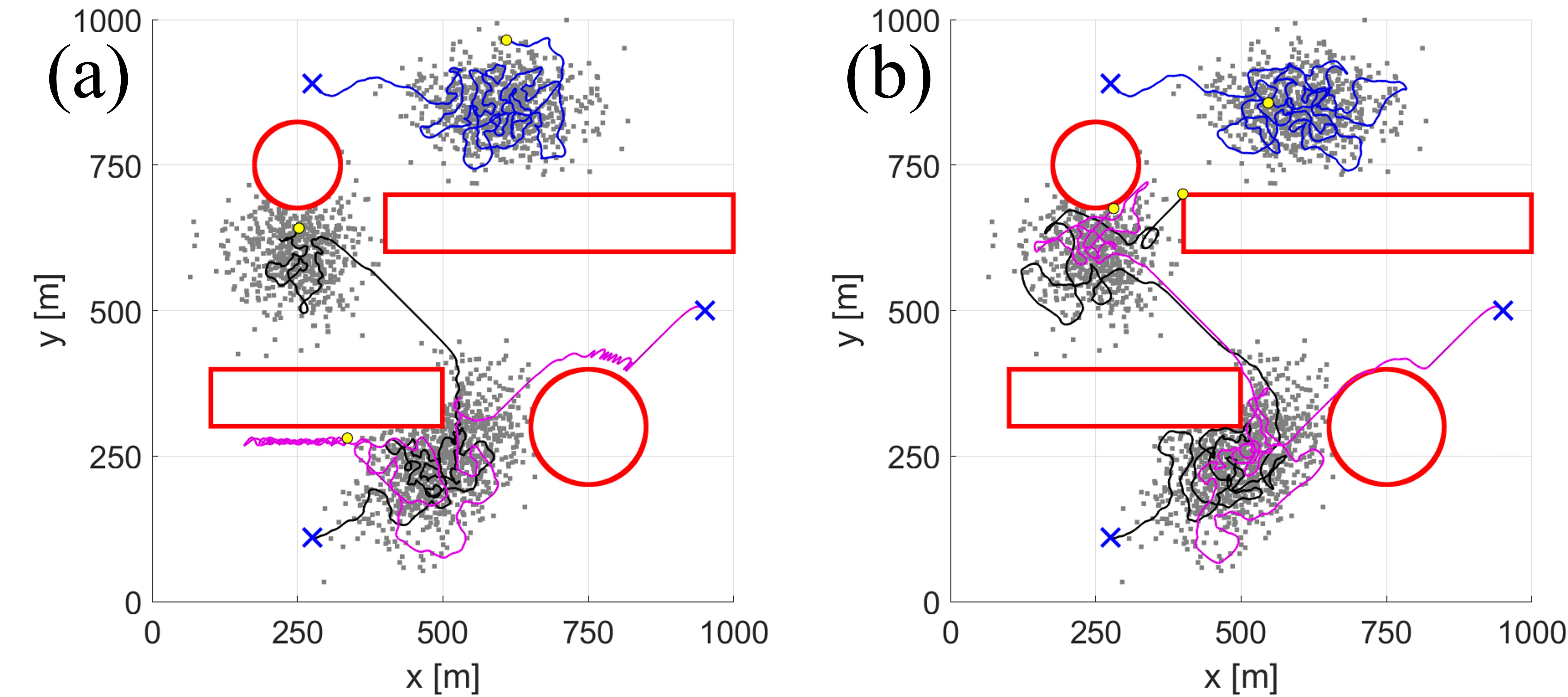}
    \caption{Safe area coverage using D$^2$C : (a) with APF; (b) with CBFs. Agent trajectories are shown as blue, black, and magenta lines.}
    \label{fig: Coverage_result}
\end{figure}


As shown in Fig.~\ref{fig: Coverage_result}(a), D$^2$C+APF exhibits oscillatory behavior near obstacle boundaries. This arises from conflicts between the repulsive forces of the potential field and the agents' attraction to SPs, which leads to overly cautious clearance around obstacles and difficulty in navigating narrow passages. Since obstacle boundaries are inferred from the shape of the potential field rather than explicitly defined, this behavior limits efficient movement. The minimum observed inter-agent distance during the operation was $12.8$ m.

In contrast, Fig.~\ref{fig: Coverage_result}(b) illustrates that D$^2$C+CBFs enables smooth, collision-free navigation. By explicitly defining obstacles as unsafe sets through $h_{i,1}$, and shrinking the safe velocity set near them using $h_{i,2}$, the velocities of the agents are appropriately adjusted near obstacles, allowing them to maneuver more closely and efficiently while maintaining safety. This results in both more precise coverage and better safety enforcement. The minimum inter-agent distance observed throughout the simulation was $14.7$ m, ensuring no collisions occurred.

To quantify area coverage performance, the Wasserstein distances between the agent trajectories and the reference distribution were computed for both simulations. The distance for D$^2$C+APF was 130.67, while for D$^2$C+CBF it was 51.60. Since the Wasserstein distance measures the dissimilarity between distributions, a lower value indicates better coverage. Thus, D$^2$C+CBF demonstrated more efficient area coverage than D$^2$C+APF.

\section{Conclusion}

This paper presents a novel approach for safe and efficient multi-agent non-uniform area coverage by integrating D$^2$C with CBFs. A key contribution is the generalized CBF formulation, enabling safe navigation around rectangular obstacles, with the unit normal vector analytically derived to ensure accurate safety enforcement. Additionally, a velocity-dependent term improves collision avoidance by adapting to dynamic environments. Simulation results demonstrate that the D$^2$C$+$CBF approach achieves better area coverage, as a result of its smoother trajectories near obstacles, compared to the D$^2$C$+$APF baseline in our simulation setup, while ensuring safe navigation. This work offers a practical solution for multi-agent systems in complex environments, with potential for further research on multi-agent coordination and more complex scenarios.

\bibliography{ifacconf.bib}             
                                                   







\end{document}